\newcommand{\package}[1]{\texttt{#1}}
\newtheorem{theorem}{Theorem}
\theoremstyle{definition}
\renewcommand{\mid}{\,|\,}
\numberwithin{equation}{section}
\title{Fragility Index for Time-to-Event Endpoints in Single-Arm Clinical Trials} 
\author[A.\ Maity, J.\ Garg, C.\ Basu]{Arnab Kumar Maity$^{\dagger}$, Jhanvi Garg$^{*}$, AND Cynthia Basu$^{\S}$}
\thanks{$^{\dagger}${Boehringer Ingelheim, 900 Ridgebury Rd, Ridgefield, CT 06877}, $^{*}${Department of Statistics, Texas A\&M University, College Station, TX 77843, USA}, $^{\S}${Pfizer Inc., 10555 Science Center Dr, San Diego, CA 92121}}
\thanks{$^{*}$Corresponding author (email: gargjhanvi@tamu.edu)}
\begin{document}
\begin{abstract}
The reliability of clinical trial outcomes is crucial, especially in guiding medical decisions. In this paper, we introduce the \textbf{Fragility Index (FI)} for time-to-event endpoints in single-arm clinical trials—a novel metric designed to quantify the robustness of study conclusions. The FI represents the smallest number of censored observations that, when reclassified as uncensored events, causes the posterior probability of the median survival time exceeding a specified threshold to fall below a predefined confidence level. While drug effectiveness is typically assessed by determining whether the posterior probability exceeds a specified confidence level, the FI offers a complementary measure, indicating how robust these conclusions are to potential shifts in the data. Using a Bayesian approach, we develop a practical framework for computing the FI based on the exponential survival model. To facilitate the application of our method, we developed an  \textsf{R} package \package{fi}, which provides a tool to compute the Fragility Index.  Through real world case studies involving time to event data from single arms clinical trials, we demonstrate the utility of this index. Our findings highlight how the FI can be a valuable tool for assessing the robustness of survival analyses in single-arm studies, aiding researchers and clinicians in making more informed decisions. 

\vspace{0.3cm}

\noindent\textbf{Keywords:} {Bayesian analysis, Exponential survival model, Fragility Index, Single-arm clinical trials, Time-to-event data}
\end{abstract}

\maketitle
\section{Introduction}\label{sec:intro}
The Fragility Index (FI) is a widely used metric in randomized controlled trials (RCTs) to assess the robustness of statistically significant findings, particularly in studies with small sample sizes or limited outcome events. Traditional significance testing, which relies on p-values, can sometimes create a false sense of confidence, as minor changes in the data may shift results from significant to non-significant, casting doubt on the reliability of conclusions \citep{editorial2019s}. For binary outcomes, the FI, introduced by \citet{walsh2014statistical}, quantifies the vulnerability of a study's results by determining the minimum number of events (e.g., successes or failures) that would need to change to reverse statistical significance. This is calculated by systematically altering the outcomes—changing non-events to events or vice versa—until the p-value exceeds the significance threshold, typically $p > 0.05$. For example, in a trial testing a new treatment, if statistical significance is observed, a low FI, meaning that just one or two more events would change the result, indicates that the findings are fragile and may lack robustness. Although there is no universal threshold for robustness, a higher FI generally indicates more reliable results \citep{andrade2020use}. To adjust for sample size, the Fragility Quotient (FQ), calculated as FI/sample size, has been proposed to provide a normalized measure of fragility \citep{heston2023robustness}. Understanding fragility allows researchers and clinicians to interpret trial results with more caution, particularly when clinical decisions are based on fragile evidence \citep{garcia2023fragility}.

Several studies have highlighted the practical implications of the FI. \citet{meyer2014fibrinolysis} demonstrated that in a pulmonary embolism trial, altering the outcomes of only three patients rendered the results non-significant, exposing the limited robustness of the findings. Similarly, \citet{tignanelli2019fragility} emphasized the value of the FI in identifying fragile findings and improving patient care by fostering a deeper understanding of trial robustness. In critical care research, \citet{ridgeon2016fragility} revealed that many trials reporting statistically significant effects on mortality had low Fragility Index scores, suggesting that their results relied heavily on a small number of events, raising concerns about their robustness. With the development of tools like the \package{fragility} \textsf{R} package \citep{lin2022assessing}, calculating and visualizing the FI for clinical trials with binary outcomes has become more accessible. However, the FI has some limitations, particularly its reliance on binary outcomes, small sample sizes, and the use of Fisher’s exact test, which can affect its accurate calculation, especially in studies involving time-to-event data \citep{baer2021fragility, potter2020dismantling}.

To overcome these limitations in survival analysis, the concept of fragility has been extended to time-to-event data in two arm trials with the introduction of the Survival-Inferred Fragility Index (SIFI) \citep{bomze2020survival}. The SIFI measures the robustness of statistically significant findings in clinical trials with survival endpoints by identifying the minimum number of patients with the longest survival times in treatment group who must be reclassified to the control group to reverse statistical significance, typically assessed using a two-sided log-rank test ($P > 0.05$). This provides a more precise measure of how sensitive survival outcomes are to changes in the data, offering deeper insights into trial stability. \citet{liu2024fragility} assessed the robustness of 332 phase III oncology trials using the SIFI and found that trials involving targeted therapies, progression-free survival endpoints, and positive outcomes tend to be the most robust. Similarly, \citet{olsen2024statistical} analyzed 113 pediatric oncology phase 3 trials and found that pediatric trials were similarly or less fragile compared to adult oncology trials.

Historically, the FI has been used primarily in two-arm or multi-arm trials. In this paper, we extend the concept of the Fragility Index to time-to-event data in single-arm trials and explore its practical applications. Section \ref{sec:method} provides a detailed definition of the FI for time-to-event data in single arm trials, explaining how it measures the robustness of trial outcomes. Section \ref{sec:simulation} presents case studies using real-world survival datasets, illustrating the usefulness of the FI in practice. Finally, Section \ref{sec:conclusion} discusses the broader implications of using the FI for survival analysis, emphasizing its value as a supplementary metric for interpreting trial results and aiding clinical decision-making.

\section{Methodology}\label{sec:method}

In clinical trials, particularly in fields like oncology and chronic diseases, outcomes are frequently associated with the time until a significant event occurs, such as disease progression or death. This leads to the analysis of time-to-event data, also known as survival data. Survival analysis not only captures whether the event occurs but also when it happens, offering richer insights into the treatment's efficacy and impact. These time-based outcomes are critical in understanding treatment effects over the course of a study and are a common focus in clinical research involving chronic or progressive conditions. A common challenge in survival analysis is the presence of censoring—a scenario where the exact time of the event is not fully observed. Right-censoring, the most common type, occurs when a patient either has not experienced the event by the end of the study or is lost to follow-up before the event occurs.
 In these cases, the patient’s data is incomplete, as we only know that the event did not occur up to a certain point in time, but the exact time (if or when it will occur) remains unknown.  Analyzing such data poses complexities, particularly in accounting for censored observations.

To model time-to-event data in survival analysis, one of the widely used distributions is the exponential distribution. Its popularity can be traced back to the seminal works of Epstein \citep{epstein1953life, epstein1954some, epstein1954truncated}, who demonstrated its practicality and versatility in modeling life data and survival times. The exponential distribution has the \textit{memoryless} property, meaning the hazard rate remains constant over time. This implies that the likelihood of the event occurring at any moment is independent of how much time has already passed, making it a suitable choice in situations where the event risk is expected to stay stable throughout the study.

The probability density function (pdf) of the exponential distribution is:
\[
f(t) = \lambda e^{-\lambda t}, \quad t \geq 0
\]
where \(\lambda\) is the rate parameter. The expected time to the event is the inverse of the rate parameter, \(1/\lambda\), providing an estimate of the average time until the event occurs. The survival function, which gives the probability that a patient survives beyond a given time \(t\), is:
\[
S(t) = e^{-\lambda t}
\]
and the hazard function, which describes the instantaneous rate of the event occurrence given that the patient has survived up to time \(t\), is constant and given by:
\[
h(t) = \lambda
\]

\subsection{Likelihood Function with Censoring}
In the presence of censoring, the likelihood function must account for both observed events and censored data. Suppose there be $n$ independent observations and for $i \leq i \leq n$, let \((T_i, \delta_i)\) represent the observed time and censoring indicator for patient \(i\), where \(\delta_i = 1\) if the event is observed and \(\delta_i = 0\) if the observation is censored. The likelihood function for \(n\) independent observations is given by:
\begin{equation}
L(\lambda) = \prod_{i=1}^n \left[ \lambda e^{-\lambda T_i} \right]^{\delta_i} \left[ e^{-\lambda T_i} \right]^{1-\delta_i}\label{eq: likelihood}
\end{equation}

\subsection{Choice of Prior}
In Bayesian analysis, the choice of prior distribution reflects initial beliefs about the parameter before observing the data. This could be based on prior studies, expert knowledge, or other relevant information available before the trial. A natural choice for the rate parameter \(\lambda\) is the Gamma prior, as it is conjugate to the exponential likelihood, simplifying posterior inference. The Gamma distribution is defined by the shape parameter \(\alpha\) and the rate parameter \(\beta\), with the probability density function:
\[
\pi(\lambda) = \frac{\beta^\alpha}{\Gamma(\alpha)} \lambda^{\alpha - 1} e^{-\beta \lambda}, \quad \lambda > 0
\]
Using Bayes’ theorem, the posterior distribution of \(\lambda\) is derived by combining the prior and the likelihood from the observed data. The posterior distribution remains a Gamma distribution with updated parameters:
\[
\alpha' = \alpha + \sum_{i=1}^n \delta_i, \quad \beta' = \beta + \sum_{i=1}^n T_i
\]

\begin{theorem}[Posterior Distribution of Rate Parameter $\lambda$]\label{thm:posterior_censoring}
Given \(n\) independent observations \((T_i, \delta_i)\) for \(i=1, \dots, n\), where \(T_i\) represents the observed time and \(\delta_i\) is the censoring indicator, and assuming a Gamma prior \( \text{Gamma}(\alpha, \beta) \) for \(\lambda\), the posterior distribution of \(\lambda\) is:
\[
\lambda \mid \left(T_i, \delta_i\right)_{i=1}^{n} \sim \text{Gamma}\left( \alpha + \sum_{i=1}^n \delta_i, \, \beta + \sum_{i=1}^n T_i \right)
\]
\end{theorem}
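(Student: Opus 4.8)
The plan is to apply Bayes' theorem directly, since the conjugacy of the Gamma prior with the exponential likelihood makes this a matter of recognizing the functional form of the posterior up to a normalizing constant. First I would write down the likelihood from \eqref{eq: likelihood} and simplify it: for each observation the two factors combine, since $\left[\lambda e^{-\lambda T_i}\right]^{\delta_i}\left[e^{-\lambda T_i}\right]^{1-\delta_i} = \lambda^{\delta_i} e^{-\lambda T_i}$, so that the full likelihood collapses to $L(\lambda) = \lambda^{\sum_i \delta_i}\, e^{-\lambda \sum_i T_i}$. This is the key algebraic simplification, and it shows that the data enter only through the two sufficient statistics $\sum_i \delta_i$ (the total number of observed events) and $\sum_i T_i$ (the total observed time on study).

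Next I would form the posterior density as proportional to the product of the prior and the likelihood,
\begin{equation}
\pi(\lambda \mid \text{data}) \propto \pi(\lambda)\, L(\lambda) \propto \left(\lambda^{\alpha-1} e^{-\beta \lambda}\right)\left(\lambda^{\sum_i \delta_i} e^{-\lambda \sum_i T_i}\right),
\end{equation}
dropping the constant $\beta^{\alpha}/\Gamma(\alpha)$ since it does not depend on $\lambda$. Collecting the powers of $\lambda$ and the terms in the exponent then gives $\pi(\lambda \mid \text{data}) \propto \lambda^{(\alpha + \sum_i \delta_i) - 1}\, e^{-(\beta + \sum_i T_i)\lambda}$. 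The final step is to recognize this as the kernel of a Gamma density with shape parameter $\alpha' = \alpha + \sum_{i=1}^n \delta_i$ and rate parameter $\beta' = \beta + \sum_{i=1}^n T_i$; since a density is determined by its kernel, the normalizing constant must be $(\beta')^{\alpha'}/\Gamma(\alpha')$, completing the identification.

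I do not anticipate a genuine obstacle here, as the result is a standard conjugacy calculation; the only point requiring minor care is the combination of the censored and uncensored contributions in the likelihood, where one must verify that the survival-function factor $\left[e^{-\lambda T_i}\right]^{1-\delta_i}$ for censored observations and the density factor for uncensored observations merge cleanly so that every observation contributes a factor $e^{-\lambda T_i}$ to the exponent while only the uncensored observations ($\delta_i = 1$) contribute a factor of $\lambda$. Once that simplification is in hand, recognizing the Gamma kernel is immediate.
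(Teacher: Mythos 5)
Your proposal is correct and follows essentially the same route as the paper's proof: simplify the censored-data likelihood to $\lambda^{\sum_i \delta_i} e^{-\lambda \sum_i T_i}$, multiply by the Gamma prior via Bayes' theorem, and identify the resulting kernel as $\text{Gamma}\left(\alpha + \sum_{i=1}^n \delta_i,\, \beta + \sum_{i=1}^n T_i\right)$. Your added remarks on sufficient statistics and the normalizing constant are accurate but do not change the argument.
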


\begin{proof}
See Appendix \ref{sec:aux}
\end{proof}
This posterior distribution provides an updated estimate of \(\lambda\), incorporating both the prior belief and the observed survival data.

\subsection{Fragility Index}
In time-to-event data, drug effectiveness is often assessed by determining whether the posterior probability that the median survival time exceeds a specified threshold is above a certain confidence level. The Fragility Index (FI) provides an additional measure of robustness  by quantifying how many events must change to bring this probability below that confidence level. Specifically, for survival data from single-arm trials, where the posterior probability that the median survival time exceeds a threshold \(t_0\) initially surpasses a predefined confidence level \(p_0\), the \textbf{Fragility Index (FI)} is defined as the smallest number \(k\) of censored observations with the shortest censoring times that, when reclassified as uncensored events, reduce the posterior probability below the specified confidence level \(p_0\).

This index provides a precise measure of the trial's robustness, indicating how sensitive the results are to changes in the data. In essence, it indicates the minimum number of censored observations that must be reclassified to uncensored events to reduce confidence in the treatment effect. Mathematically, the median survival time \(t_{\text{med}}\) for an exponential distribution is given by:
\[
t_{\text{med}} = \frac{\ln 2}{\lambda}
\]

\begin{theorem}[Median Survival Time for Exponential Distribution]\label{thm: 3}
For a random variable \(T\) following an exponential distribution with rate parameter \(\lambda\), the median survival time is:
\[
t_{\text{med}} = \frac{\ln 2}{\lambda}
\]
\end{theorem}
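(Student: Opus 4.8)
The plan is to work directly from the definition of the median of a continuous random variable. For such a variable $T$, the median $t_{\text{med}}$ is the value satisfying $P(T \leq t_{\text{med}}) = 1/2$, or equivalently $P(T > t_{\text{med}}) = 1/2$. Since the survival function $S(t) = P(T > t)$ for the exponential distribution was already established above as $S(t) = e^{-\lambda t}$, the natural route is to impose the defining condition $S(t_{\text{med}}) = 1/2$ and solve for $t_{\text{med}}$.

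Concretely, the single substantive step is to solve the equation
\[
e^{-\lambda t_{\text{med}}} = \frac{1}{2}
\]
for $t_{\text{med}}$. Taking natural logarithms on both sides gives $-\lambda t_{\text{med}} = \ln(1/2) = -\ln 2$, and dividing through by $-\lambda$ (valid since $\lambda > 0$) yields $t_{\text{med}} = (\ln 2)/\lambda$, which is precisely the claimed expression.

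There is no genuine obstacle here: the result follows immediately from the closed form of the exponential survival function together with the strict monotonicity of $S$, which guarantees that the median is well-defined and unique. The only point worth stating explicitly is that $\lambda > 0$, so that the division is legitimate and the resulting $t_{\text{med}}$ is a positive real number, consistent with the support $T \geq 0$.
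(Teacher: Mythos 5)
Your proof is correct and follows essentially the same route as the paper's: both impose the defining condition $S(t_{\text{med}}) = 1/2$ on the exponential survival function $S(t) = e^{-\lambda t}$ and solve by taking logarithms. Your explicit remarks on $\lambda > 0$ and the strict monotonicity of $S$ (ensuring uniqueness of the median) are minor additions the paper leaves implicit.
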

\begin{proof}
See Appendix \ref{sec:aux}
\end{proof}

The posterior probability that the median survival time exceeds a threshold \(t_0\) is computed by integrating over the posterior distribution of \(\lambda\).

\begin{theorem}[Posterior Probability of Median Survival Time]\label{thm: main}
Given the posterior distribution of the rate parameter \(\lambda\) for an exponential survival model, with data \(\left(T_i, \delta_i\right)_{i=1}^{n}\), as $\text{Gamma}\left( \alpha + \sum_{i=1}^n \delta_i, \, \beta + \sum_{i=1}^n T_i \right)$, the posterior probability that the median survival time \(t_{\text{med}}\) exceeds a threshold \(t_0\) is:
\[
P\left( t_{\text{med}} > t_0 \mid \left(T_i, \delta_i\right)_{i=1}^{n} \right) = P\left( \lambda < \frac{\ln 2}{t_0} \mid \left(T_i, \delta_i\right)_{i=1}^{n} \right).
\]
This probability is computed by integrating the posterior Gamma distribution:
\[
P\left( t_{\text{med}} > t_0 \right) = \int_0^{\frac{\ln 2}{t_0}} \frac{\beta'^{\alpha'}}{\Gamma(\alpha')} \lambda^{\alpha' - 1} e^{-\beta' \lambda} \, d\lambda,
\]
where \(\alpha' = \alpha + \sum_{i=1}^n \delta_i\) is the updated shape parameter, and \(\beta' = \beta + \sum_{i=1}^n T_i\) is the updated rate parameter.
\end{theorem}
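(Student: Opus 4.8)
The plan is to reduce the event $\{t_{\text{med}} > t_0\}$ to an equivalent event expressed directly in terms of the rate parameter $\lambda$, and then evaluate its probability under the posterior Gamma law established in Theorem \ref{thm:posterior_censoring}. First I would invoke Theorem \ref{thm: 3}, which supplies the deterministic relation $t_{\text{med}} = (\ln 2)/\lambda$ for the exponential model. Since $\ln 2 > 0$ and the posterior places all of its mass on $\lambda > 0$, the map $\lambda \mapsto (\ln 2)/\lambda$ is a strictly decreasing bijection of $(0,\infty)$ onto itself. Consequently the inequality $t_{\text{med}} > t_0$ is equivalent, almost surely under the posterior, to $(\ln 2)/\lambda > t_0$, which rearranges to $\lambda < (\ln 2)/t_0$. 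The one point to handle carefully is the reversal of the inequality direction when dividing by the positive quantity $\lambda$; this reversal is precisely what turns the right-tail event on $t_{\text{med}}$ into a left-tail event on $\lambda$.

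With the event equivalence in hand, the first displayed identity follows immediately, namely
\[
P\left(t_{\text{med}} > t_0 \mid (T_i, \delta_i)_{i=1}^n\right) = P\left(\lambda < \frac{\ln 2}{t_0} \mid (T_i,\delta_i)_{i=1}^n\right).
\]
The second displayed claim then follows by substituting the posterior density from Theorem \ref{thm:posterior_censoring}, that is $\text{Gamma}(\alpha',\beta')$ with $\alpha' = \alpha + \sum_{i=1}^n \delta_i$ and $\beta' = \beta + \sum_{i=1}^n T_i$, and writing the left-tail probability as the integral of that density over the interval $\left(0, (\ln 2)/t_0\right)$.

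I do not anticipate any genuine obstacle: the argument is a change of variable on a deterministic monotone map, followed by writing down a Gamma cumulative distribution function. The only facts that must be stated precisely are (i) that the transformation $\lambda \mapsto (\ln 2)/\lambda$ is strictly decreasing, so that the inequality flips, and (ii) that the resulting integral is the lower incomplete Gamma probability, hence well defined and finite for all $\alpha' > 0$, $\beta' > 0$, and $t_0 > 0$. If useful, I would add a closing remark that this quantity coincides with the regularized incomplete gamma function evaluated at $(\ln 2)/t_0$, which is the form that the accompanying \package{fi} package evaluates numerically.
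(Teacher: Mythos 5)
Your proposal is correct and takes the same route as the paper, which states only that Theorem~\ref{thm: main} ``follows directly'' from Theorems~\ref{thm:posterior_censoring} and~\ref{thm: 3}: you invert the monotone map $\lambda \mapsto (\ln 2)/\lambda$ to turn the right-tail event on $t_{\text{med}}$ into the left-tail event $\{\lambda < (\ln 2)/t_0\}$ and then integrate the $\text{Gamma}(\alpha', \beta')$ posterior density. Your version is in fact more explicit than the paper's, since you spell out the inequality reversal and the well-definedness of the lower incomplete Gamma integral, which the paper leaves implicit.
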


The proof of the Theorem \ref{thm: main} follows directly from Theorem \ref{thm:posterior_censoring} and \ref{thm: 3}. The Fragility Index is determined by sequentially reclassifying censored observations with the smallest censoring times to uncensored events and recalculating the posterior probability until it falls below the specified confidence level. A higher Fragility Index indicates greater robustness of the study results, implying that the treatment effect remains consistent even when multiple censoring statuses are changed. Conversely, a lower Fragility Index suggests that the results are sensitive to small changes in the data, potentially undermining confidence in the conclusions. It is important to note that there is no universally defined threshold for the FI; it serves as a relative indicator of robustness rather than an absolute measure of significance. Therefore, the FI should be used alongside other statistical measures to provide a more comprehensive evaluation of a study's findings.

\section{Numerical Study}\label{sec:simulation}

In this section, we apply the Fragility Index (FI) methodology to real-world clinical trial data to evaluate the robustness of study conclusions.  By utilizing our \href{https://github.com/arnabkrmaity/fi}{\textsf{R} package \package{fi}}\footnote{\url{https://github.com/arnabkrmaity/fi}}, we efficiently calculate the FI, demonstrating its effectiveness and ease of use in survival analysis.
\subsection{Case Study 1: Lung Cancer}

We begin with the North Central Cancer Treatment Group data, available in the \texttt{lung} dataset from the \texttt{survival} package in R. This dataset comprises observations from patients with advanced lung cancer, providing detailed information on survival times and various clinical variables. 

For this analysis, we randomly selected 30 patients from the dataset, of whom 22 had experienced the event (death), while the remaining 8 were censored. Using the observed survival times and censoring data, we calculated the posterior probability that the median survival time exceeded 7 months. A Gamma prior with shape parameter \(\alpha = 0.5\) and rate parameter \(\beta = 0.5\) was chosen for the survival rate parameter \(\lambda\), as it is weakly informative and conjugate to the exponential likelihood, making it an appropriate choice for survival analysis. The resulting posterior probability was 0.935, indicating a high likelihood that the median survival time exceeded 7 months.

The Kaplan-Meier curve for this dataset is shown in Figure \ref{fig:1}.

\begin{figure}[H]
\centering
\includegraphics[width=0.7\textwidth]{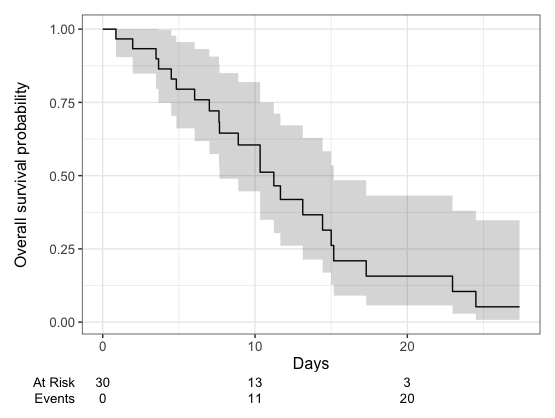}
\caption{Kaplan-Meier curve for the lung cancer dataset}
\label{fig:1}
\end{figure}

The Fragility Index was determined by sequentially reclassifying censored observations with the shortest censoring times as events and recalculating the posterior probability until it fell below the predefined confidence threshold of 0.7, a standard choice balancing statistical confidence and flexibility. For this dataset, the Fragility Index was found to be 5. This means that reclassifying five censored patients as having experienced the event (death) would reduce the posterior probability of the median survival time exceeding 7 months to below 0.7. Given a sample size of 30, an FI of 5 suggests  that the study’s conclusions are robust. This level of FI suggests that the findings of the study are stable and reliable, even with moderate changes in the data. 

\subsection{Case Study 2: Pembrolizumab in  hepatocellular carcinoma (HCC)}

For the third case study, we analyzed progression free survival time for Pembrolizumab, an immune checkpoint inhibitor widely used in the treatment of various cancers, including advanced hepatocellular carcinoma (HCC) and other malignancies. Pembrolizumab functions by blocking the programmed cell death protein 1 (PD-1) receptor, thereby enhancing the immune system's ability to detect and destroy cancer cells. The study \citep{feun2019phase} focused on its application in advanced hepatocellular carcinoma (HCC), a disease that has shown modest response rates to checkpoint inhibitors. The Individual Patient Data (IPD) from the treatment arm was extracted from the Kaplan-Meier curve presented in \citet{feun2019phase}, using the MD Anderson Cancer Center's IPD extraction tool\footnote{\url{https://biostatistics.mdanderson.org/shinyapps/IPDfromKM/}}.

The dataset, constructed using the above extraction tool, comprises 28 patients. Among them, 20 experienced disease progression, while the remaining 8 were censored. Following the same methodology as in the previous analyses, we calculated the posterior probability that the median progression-free survival time exceeded 3.5 months. A Gamma prior with shape parameter \(\alpha = 0.5\) and rate parameter \(\beta = 0.5\) was applied, consistent with our prior studies, and the cutoff for the posterior probability was set at 0.7. The resulting posterior probability was 0.958, indicating a high likelihood that the median progression-free survival time exceeded 3.5 months.

The Kaplan-Meier curve for this dataset is shown in Figure \ref{fig:3}.

\begin{figure}[H]
\centering
\includegraphics[width=0.7\textwidth]{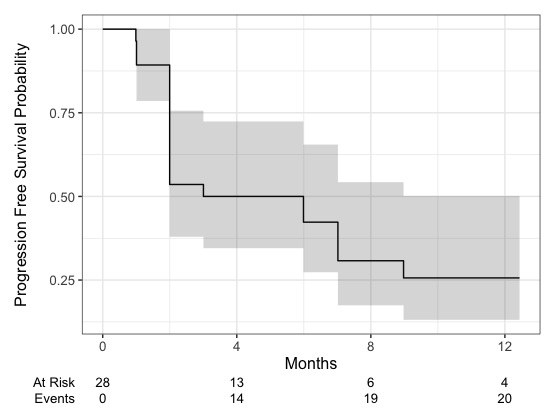}
\caption{Kaplan-Meier curve for the breast cancer dataset treated with Pembrolizumab}
\label{fig:3}
\end{figure}

The Fragility Index for this dataset was determined to be 6, meaning that reclassifying six censored patients as having experienced the event (disease progression) would reduce the posterior probability of the median progression-free survival time exceeding 3.5 months to below 0.7. Given the sample size of 28, an FI of 6 indicates that the study’s conclusions are robust. This level of FI suggests that the findings regarding Pembrolizumab’s efficacy are stable and reliable, even with moderate changes in the data. 

\subsection{Case Study 3: Palbociclib in  Breast Cancer}

For this case study, we analyzed progression free survival time from a single-arm phase II study investigating Palbociclib, a CDK4/6 inhibitor commonly used for the treatment of hormone receptor (HR)-positive, HER2-negative advanced or metastatic breast cancer (MBC). The study protocols are outlined in detail in \citet{krishnamurthy2022phase} and, as with Case Study 2, Individual Patient Data (IPD) for the treatment arm was  obtained from the Kaplan-Meier curve in \citet{krishnamurthy2022phase}.

The dataset, constructed using the MD Anderson Cancer Center's IPD extraction tool, includes 51 patients. Among them, 31 experienced disease progression, while the remaining patients were censored.

Using the same methodology as in the previous analysis, we calculated the posterior probability that the median survival time exceeded 15 months. A Gamma prior with shape parameter \(\alpha = 0.5\) and rate parameter \(\beta = 0.5\) was applied, with the cutoff for the posterior probability set at 0.7. The resulting posterior probability was 0.948, indicating a high likelihood that the median survival time exceeded 15 months.

The Kaplan-Meier curve for this dataset is displayed in Figure \ref{fig:2}.

\begin{figure}[H]
\centering
\includegraphics[width=0.7\textwidth]{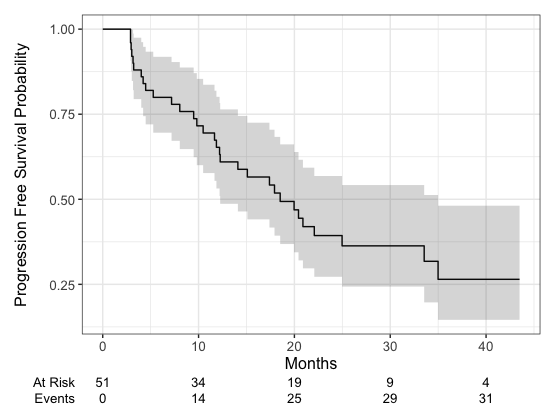}
\caption{Kaplan-Meier curve for the breast cancer dataset treated with Palbociclib}
\label{fig:2}
\end{figure}
The Fragility Index (FI) for this dataset was calculated to be 6, indicating that if six censored patients were reclassified as having experienced the event (disease progression), the posterior probability of the median survival time exceeding 15 months would drop below 0.7. With a sample size of 51, an FI of 6 suggests that the study's conclusions are moderately robust and not entirely resistant to changes in the data. This implies that the treatment effect shows some sensitivity to data alterations.

\section{Concluding Remarks}\label{sec:conclusion}

In this paper, we defined the Fragility Index (FI) for time-to-event endpoints in single-arm clinical trials, developed a Bayesian methodology using the exponential distribution, and demonstrated its application through three real-world case studies. We also  developed an \textsf{R} package \package{fi}, to facilitate the calculation of the FI, making it accessible for researchers to apply this in similar studies. The FI quantifies the robustness of study conclusions by identifying the minimum number of censored observations that, when reclassified as events, reduce the posterior probability of the median survival time exceeding a specified threshold below a confidence level. Our case studies yielded FI values of 5 and 6, indicating moderate robustness; while the findings are relatively stable, they remain somewhat sensitive to data changes. This underscores the FI's value as a complementary tool to traditional statistical methods, enhancing the interpretability and reliability of clinical trial outcomes, especially in the absence of control groups. However, the FI lacks a universal threshold and is influenced by the choice of prior distribution, necessitating careful interpretation and integration with other clinical impact measures. In conclusion, the Fragility Index represents a significant advancement in assessing the robustness of single-arm clinical trials with time-to-event data, supporting more informed and reliable clinical decision-making.

\clearpage 
\newpage 

\begin{center}
    \Large{\textbf{APPENDIX}}  
\end{center}
\medskip 
The calculation of the Fragility Index and related numerical studies are implemented in R and made accessible through the  GitHub repository \href{https://github.com/arnabkrmaity/fi}{(https://github.com/arnabkrmaity/fi)}.

\setcounter{section}{0} % Reset section counter to start with 1
\renewcommand{\thesection}{\Alph{section}} % Change numbering to letter
\renewcommand{\thetheorem}{\thesection\arabic{theorem}}
\setcounter{theorem}{0}
\renewcommand{\thelemma}{\thesection\arabic{lemma}}
\renewcommand{\thecorollary}{\thesection\arabic{corollary}}
\renewcommand{\thedefinition}{\thesection\arabic{definition}}
\renewcommand{\theexample}{\thesection\arabic{example}}
\setcounter{example}{0}
\renewcommand{\thefigure}{\thesection\arabic{figure}}
\setcounter{figure}{0}
\renewcommand{\thetable}{\thesection\arabic{table}}
\setcounter{table}{0}
\renewcommand{\theremark}{\thesection\arabic{remark}}
\setcounter{remark}{0}
\section{Proof of the Theorems} \label{sec:aux}
\begin{proof}[Proof of Theorem~\ref{thm:posterior_censoring}]
The likelihood function for the censored data is:
\[
L(\lambda) = \prod_{i=1}^n \left( \lambda e^{-\lambda T_i} \right)^{\delta_i} \left( e^{-\lambda T_i} \right)^{1 - \delta_i}
= \lambda^{\sum_{i=1}^n \delta_i} e^{-\lambda \sum_{i=1}^n T_i}
\]
By applying Bayes' Theorem, the posterior distribution is proportional to the product of the likelihood and the prior:
\[
p(\lambda \mid T_1, \dots, T_n, \delta_1, \dots, \delta_n) \propto L(\lambda) \pi(\lambda)
\]
Substituting the likelihood and prior expressions:
\[
p(\lambda \mid T_1, \dots, T_n, \delta_1, \dots, \delta_n) \propto \lambda^{\alpha + \sum_{i=1}^n \delta_i - 1} e^{-\lambda \left( \beta + \sum_{i=1}^n T_i \right)}
\]
This is a Gamma distribution with updated parameters \(\alpha' = \alpha + \sum_{i=1}^n \delta_i\) and \(\beta' = \beta + \sum_{i=1}^n T_i\), thus:
\[
\lambda \mid \left(T_i, \delta_i\right)_{i = 1}^{n}  \sim \text{Gamma}\left( \alpha + \sum_{i=1}^n \delta_i, \, \beta + \sum_{i=1}^n T_i \right)
\]
\end{proof}

\begin{proof}[Proof of Theorem~\ref{thm: 3}]
The survival function \(S(t)\) for the exponential distribution is:
\[
S(t) = P(T > t) = e^{-\lambda t}
\]
Setting \(S(t_{\text{med}}) = 0.5\), we solve:
\[
e^{-\lambda t_{\text{med}}} = 0.5 \quad \Rightarrow \quad t_{\text{med}} = \frac{\ln 0.5}{-\lambda} = \frac{\ln 2}{\lambda}
\]
\end{proof}
\end{document}